\begin{document}
\title{Investigating the Robustness of\\Sequential Recommender Systems\\Against Training Data Perturbations
\thanks{This work was partially supported by projects FAIR (PE0000013) and SERICS (PE00000014) under the MUR National Recovery and Resilience Plan funded by the European Union - NextGenerationEU. Supported also by the ERC Advanced Grant 788893 AMDROMA,  EC H2020RIA project “SoBigData++” (871042), PNRR MUR project  IR0000013-SoBigData.it. This work has been supported by the project NEREO (Neural Reasoning over Open Data) project funded by the Italian Ministry of Education and Research (PRIN) Grant no. 2022AEFHAZ.}
}
\titlerunning{Robustness of SRSs Against Training Perturbations}
%
 \author{Filippo Betello\inst{1}\orcidlink{0009-0006-0945-9688} \and
 Federico Siciliano\inst{1} \orcidlink{0000-0003-1339-6983} \and
 \\ Pushkar Mishra \inst{2} \orcidlink{0000-0002-1653-6198} \and
 Fabrizio Silvestri \inst{1} \orcidlink{0000-0001-7669-9055}}
\authorrunning{F. Betello et al.}
%
\institute{Sapienza University of Rome, Rome, Italy \\ \email{\{betello, siciliano, fsilvestri\}@diag.uniroma1.it} \and
AI at Meta, London, UK
\email{pushkarmishra@meta.com}\\
}
\maketitle

\begin{abstract}
Sequential Recommender Systems (SRSs) are widely employed to model user behavior over time.
However, their robustness in the face of perturbations in training data remains a largely understudied yet critical issue.
A fundamental challenge emerges in previous studies aimed at assessing the robustness of SRSs: the Rank-Biased Overlap (RBO) similarity is not particularly suited for this task as it is designed for infinite rankings of items and thus shows limitations in real-world scenarios. For instance, it fails to achieve a perfect score of 1 for two identical finite-length rankings.
To address this challenge, we introduce a novel contribution: Finite Rank-Biased Overlap (FRBO), an enhanced similarity tailored explicitly for finite rankings. This innovation facilitates a more intuitive evaluation in practical settings.
In pursuit of our goal, we empirically investigate the impact of removing items at different positions within a temporally ordered sequence.
We evaluate two distinct SRS models across multiple datasets, measuring their performance using metrics such as Normalized Discounted Cumulative Gain (NDCG) and Rank List Sensitivity.
Our results demonstrate that removing items at the end of the sequence has a statistically significant impact on performance, with NDCG decreasing up to 60\%. Conversely, removing items from the beginning or middle has no significant effect.
These findings underscore the criticality of the position of perturbed items in the training data. As we spotlight the vulnerabilities inherent in current SRSs, we fervently advocate for intensified research efforts to fortify their robustness against adversarial perturbations.

\keywords{Recommender Systems \and Evaluation of Recommender Systems \and Model Stability \and Input Data Perturbation}
\end{abstract}
\section{Introduction}
Recommender systems have become ubiquitous in our daily lives \cite{adomavicius2005toward}, playing a key role in helping users navigate the vast amounts of information available online.
Thanks to the global spread of e-commerce services, social media and streaming platforms, recommender systems have become increasingly important for personalized content delivery and user engagement \cite{zhang2019deep}.
In recent years, Sequential Recommender Systems (SRSs) have emerged as a popular approach to modeling user behavior over time \cite{quadrana2018sequence}, leveraging the temporal dependencies in users' interaction sequences to make more accurate predictions.

However, despite their success, the robustness of SRSs against perturbations in the training data remains an open research question \cite{Li_2021}.
In real-world scenarios, disruptions may occur when users employ different services for the same purpose. Data becomes fragmented and divided between a service provider and its competitors in such cases. Nevertheless, the provider must train a recommender system with such incomplete data while ensuring robustness to perturbations.
This challenge is accentuated when we scrutinize previous attempts \cite{oh2022rank} to assess the robustness of SRSs: Rank-Biased Overlap (RBO) \cite{webberrbo}, designed explicitly for infinite lists, reveals limitations when applied to real-world scenarios.

Our experiments revolve around the following research questions:
\begin{itemize}
    \item \textbf{RQ1}: Do changes in the training seed heavily impact rankings?
    \item \textbf{RQ2}: How does the type of removal influence the model's performance?
    \item \textbf{RQ3}: Does more item removed  significantly decrease in performance?
\end{itemize}

\looseness -1 
The contribution of this study is two-fold.
Firstly, we propose the novel Finite Rank-Biased Overlap (FRBO) measure. Unlike RBO, which is only suited for infinite rankings, FRBO is specifically designed to assess the robustness of SRSs within finite ranking scenarios, aligning seamlessly with real-world settings.
Secondly, we empirically assess the impact of item removal from user interaction sequences on SRS performance.
Our investigation shows that the most recent user interaction sequence items are critical for accurate recommendation performance. When these items are removed, there is a significant drop in all metrics.

\section{Related works} \label{sec:related}
\subsection{Sequential Recommender Systems}
SRSs are algorithms that leverage a user's past interactions with items to make personalized recommendations over time and they have been widely used in various applications, including e-commerce \cite{schafer2001commerce,hwangbo2018recommendation}, social media \cite{guy2010social,amato2017recommendation}, and music streaming platforms \cite{schedl2015music,schedl2018current,afchar2022explainability}. Compared to traditional recommender systems, SRSs consider the order and timing of these interactions, allowing for more accurate predictions of a user's preferences and behaviors \cite{wang2019sequential}.

Various techniques have been proposed to implement SRSs.
Initially, Markov Chain models were employed in Sequential Recommendation \cite{fouss2005novel,fouss2005web}, but they struggle to capture complex dependencies in long-term sequences. In recent years, Recurrent Neural Networks (RNNs) have emerged as one of the most promising approaches in this field \cite{donkers2017sequential,hidasi2016sessionbased,quadrana2017personalizing}. These methods encode the users' historical preferences into a vector that gets updated at each time step and is used to predict the next item in the sequence. Despite their success, RNNs may face challenges dealing with long-term dependencies and generating diverse recommendations.
Another approach comes from the use of the attention mechanism \cite{vaswani2017attention}: two different examples are SASRec \cite{kang2018self} and BERT4Rec \cite{sun2019bert4rec} architectures. This method dynamically weighs the importance of different sequence parts to better capture the important features and improve the prediction accuracy.
\looseness -1 Recently, Graph Neural Networks have become popular in the field of recommendations system \cite{wu2022graph,purificato2023sheaf}, especially in the sequential domain \cite{chang2021sequential,fan2021continuous}.

\subsection{Robustness in Sequential Recommender Systems}
\looseness -1 Robustness is an important aspect of SRSs as they are vulnerable to noisy and incomplete data.
Surveys on the robustness of recommender systems \cite{hurley2011robustness,burke2015robust} discussed the challenges in developing robust recommender systems and presented various techniques for improving the robustness of recommender systems.

\looseness -1 Many tools have been developed to test the robustness of different algorithms: \cite{o2004collaborative}, provided both a formalisation of the topic and a framework; the same was done more recently by \cite{ovaisi2022rgrecsys}, who developed a toolkit called RGRecSys which provides a unified framework for evaluating the robustness of SRSs.

Few recent work has focused on studying the problem and trying to increase robustness:
\cite{tang2023improving} focuses on robustness in the sense of training stability while
\cite{oh2022rank} investigated the robustness of SRSs to interaction perturbations. They showed that even small perturbations in user-item interactions can lead to significant changes in the recommendations. They proposed Rank List Sensitivity (RLS), a measure of the stability of rankings produced by recommender systems.

\looseness -1 Our work expands these, making a more accurate investigation of the effect different types of perturbation can have on the models' performance. While \cite{oh2022rank} perturbs a single interaction in the whole dataset, we perturb the sequences of all users and analyze the performance as the number of perturbations changes.
We also provide a theoretical contribution in a new sensitivity evaluation measure for finite rankings, presented in Sec. \ref{sec:metrics}.


\section{Methodology} \label{sec:methods}

\subsection{Setting}
In Sequential Recommendation, each user $u$ is represented by a temporally ordered sequence of items $S_u = (I_1, I_2, ..., I_j, ..., I_{L_u-1}, I_{L_u})$ with which it has interacted, where $L_u$ is the length of the sequence for user $u$. 

User-object interactions in real-world scenarios are often fragmented across services, resulting in a lack of comprehensive data. For example, in the domains of movies and TV shows, a single user may interact with content on TV, in a movie theater, or across multiple streaming platforms. To mimic this real-world scenario in our training data perturbations, we considered three different cases, each removing $n$ items at a specific position in the sequence:

\begin{itemize}
    \item \textbf{Beginning}: $S_u = (I_{n+1}, \dots, I_{L_u-1})$. This represents a user who signs up for a new service, so all his past interactions, i.e., those at the beginning of the complete sequence, were performed on other services.
    \item \textbf{Middle}: $S_u = (I_1, \dots,I_{\left\lfloor\frac{L_u-1-n}{2}\right\rfloor},I_{\left\lfloor\frac{L_u-1+n}{2}\right\rfloor} ..., I_{L_u-1})$. This represents a user who takes a break from using the service for a certain period and resumes using it. Still, any interactions they had during the considered period are not available to the service provider.
    \item \textbf{End}: $S_u = (I_1, \dots, I_{L_u-1-N})$. This represents a user who has stopped using the service, so the service provider loses all the subsequent user interactions. The service provider still has an interest in winning the user back through their platform or other means, such as advertising. Thus, it is essential to have a robust model to continue providing relevant items to the user.
\end{itemize}

with $n \in \{1,2,...,10\}$.
In practice, the data is first separated into training, validation and test set (always composed by $I_{L_u}$). Subsequently, only the training data are perturbed, with a methodology dependent on the scenario considered and the model is then trained on these. The models, trained on data perturbed in a different manner, are therefore always tested on the same data.



\subsection{Metrics}\label{sec:metrics}
\looseness -1 To evaluate the performance of the models, we employ traditional evaluation metrics used for Sequential Recommendation: Precision, Recall, MRR and NDCG.

\looseness -1 Moreover, to investigate the stability of the recommendation models, we employ the Rank List Sensitivity (RLS) \cite{oh2022rank}: it compares two lists of rankings $\mathcal{X}$ and $\mathcal{Y}$, one derived from the model trained under standard conditions 
and the other derived from the model trained with perturbed data.

Therefore, having these two rankings, and a similarity function $sim$ between them, we can formalize the RLS measure as:

\begin{equation}
    \boldsymbol{\mathrm{RLS}} = \frac{1}{|\mathcal{X}|} \sum\limits_{k=1}^{|\mathcal{X}|} \text{sim}(R^{X_k}, R^{Y_k})
\end{equation}

where $X_k$ and $Y_k$ represent the $k$-th ranking inside $\mathcal{X}$ and $\mathcal{Y}$ respectively.

RLS's similarity measure can be chosen from two possible options:
\begin{itemize}
    \item \looseness -1\textbf{Jaccard Similarity (JAC)} \cite{jaccard1912distribution} is a normalized measure of the similarity of the contents of two sets. A model is stable if its Jaccard score is close to 1.
    \begin{equation}
       \boldsymbol{\mathrm{JAC(X,Y)}} = \! \frac{|X \cap Y|}{|X \cup Y|}
    \end{equation}
    \item \textbf{Rank-Biased Overlap (RBO)} \cite{webberrbo} measures the similarity of orderings between two rank lists. Higher values indicate that the items in the two lists are arranged similarly:
    \begin{equation} \label{eq:RBOinfinite}
        \boldsymbol{\mathrm{RBO(X,Y)}} = (1-p) \sum_{d=1}^{+\infty} p^{d-1} \frac{|X[1:d] \cap Y[1:d]|}{d}
    \end{equation}
\end{itemize}

\looseness -1 In the domain of recommendation systems, it is customary to compute metrics using finite-length rankings, typically denoted by appending ``@k'' to the metric's name, such as NDCG@k. While traditional metrics (e.g. NDCG, MRR, etc.) readily adapt to finite-length rankings, maintaining their core meaning, the same behaviour does not extend to RLS when employing RBO. The reason lies in Equation \ref{eq:RBOinfinite}, which exhibits a notable limitation: it fails to converge to one, even when applied to identical finite-length lists. To overcome this limitation, we introduce the Finite Rank-Biased Overlap (FRBO) similarity, denoted as $\mathrm{FRBO@k}$, which represents a novel formulation engineered to ensure convergence to a value of 1 for identical lists and a value of 0 for entirely dissimilar lists.
\begin{theorem}\label{def:FRBOfull}
\looseness -1 Given a set of items $I = \{I_1,...,I_{N_I}\}$, two rankings $X = (x_1,...,x_k)$ and $Y= (y_1,...,y_k)$, such that $x_i,y_i \in I$, and $k \in \mathbb{N^+}$
    \begin{equation}\label{eq:FRBOfull}
    \boldsymbol{\mathrm{FRBO(X,Y)@k}} = \frac{\mathrm{RBO(X,Y)@k}-min_{X,Y}\mathrm{RBO@k}}{max_{X,Y}\mathrm{RBO@k}-min_{X,Y}\mathrm{RBO@k}}
    \end{equation}
$$\min_{X,Y}\mathrm{FRBO(X,Y)@k} = 0,\; \max_{X,Y}\mathrm{FRBO(X,Y)@k} = 1$$
\end{theorem}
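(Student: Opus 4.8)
The plan is to reduce the two asserted equalities to a min--max normalization identity, after pinning down the two extreme values of $\mathrm{RBO}@k$ that enter Equation~\ref{eq:FRBOfull}. First I would observe that, since $I$ is finite and each ranking has fixed length $k$, there are only finitely many admissible pairs $(X,Y)$; hence $\min_{X,Y}\mathrm{RBO}@k$ and $\max_{X,Y}\mathrm{RBO}@k$ are genuinely attained, and the normalization in Equation~\ref{eq:FRBOfull} is well defined as soon as these two values differ. Writing $\mathrm{RBO}(X,Y)@k = (1-p)\sum_{d=1}^{k} p^{d-1}\,|X[1:d]\cap Y[1:d]|/d$ as the finite truncation of Equation~\ref{eq:RBOinfinite}, I would also record that every summand is nonnegative, so $\mathrm{RBO}@k \ge 0$ for all pairs.

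Next I would locate the maximum termwise. For each depth $d$ the overlap satisfies $|X[1:d]\cap Y[1:d]| \le d$, so each factor $|X[1:d]\cap Y[1:d]|/d$ is at most $1$, with equality exactly when the top-$d$ sets coincide. Taking $X=Y$ forces all of these factors to equal $1$ simultaneously, and a short induction on $d$ (matching $x_1=y_1$, then $x_2=y_2$, and so on) shows this is the only configuration that does so. Hence $\max_{X,Y}\mathrm{RBO}@k = (1-p)\sum_{d=1}^{k}p^{d-1} = 1-p^{k}$ by summing the geometric series, where $p\in(0,1)$ guarantees this value is strictly positive.

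For the minimum I would instead push every summand to zero at once. If $X$ and $Y$ can be chosen with disjoint supports, which is possible precisely when the catalogue is large enough to host two non-overlapping top-$k$ lists, i.e. $N_I \ge 2k$, then $|X[1:d]\cap Y[1:d]| = 0$ for all $d\le k$ and $\mathrm{RBO}@k=0$, giving $\min_{X,Y}\mathrm{RBO}@k = 0$. Substituting the two extremes into Equation~\ref{eq:FRBOfull} collapses $\mathrm{FRBO}(X,Y)@k$ to the closed form $\mathrm{RBO}(X,Y)@k/(1-p^{k})$; since $0\le \mathrm{RBO}@k \le 1-p^{k}$ for every pair, the normalized quantity always lies in $[0,1]$, equals $0$ at any disjoint pair and $1$ at $X=Y$, which is exactly the claim that $\min_{X,Y}\mathrm{FRBO}@k = 0$ and $\max_{X,Y}\mathrm{FRBO}@k = 1$.

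The one step that needs care is the minimum. The clean value $0$ relies on the existence of two disjoint top-$k$ rankings, which fails once $N_I < 2k$: a pigeonhole argument then forces $|X[1:k]\cap Y[1:k]| \ge 2k-N_I$, and the true minimizer must trade off where those unavoidable coincidences sit across depths, since the $1/d$ weighting penalizes shallow overlaps more heavily. In that regime $\min_{X,Y}\mathrm{RBO}@k$ becomes a small but strictly positive combinatorial optimum rather than $0$, and the denominator is no longer the constant $1-p^{k}$. I would therefore prove the theorem under the mild assumption $N_I \ge 2k$, standard for recommendation catalogues, under which the normalizing interval is $[0,\,1-p^{k}]$ and the endpoints $0$ and $1$ follow immediately from the min--max construction.
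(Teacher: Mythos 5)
Your computation of the two extremes is correct and essentially reproduces the paper's two auxiliary lemmas ($\max_{X,Y}\mathrm{RBO}@k = 1-p^k$ attained at $X=Y$, and $\min_{X,Y}\mathrm{RBO}@k=0$ attained by disjoint rankings exactly when $2k\le N_I$), but you have attached that work to the wrong claim, and the restriction to $N_I\ge 2k$ that you impose on the theorem is an unnecessary weakening. The paper proves the theorem in one line, as a pure min--max normalization identity: for any function $f:A\to[a,b]$ with $a=\min f$ and $b=\max f$ attained (automatic here, as you correctly note, since there are finitely many admissible pairs $(X,Y)$), the function $g=(f-a)/(b-a)$ attains $0$ at every minimizer of $f$ and $1$ at every maximizer, so $\min g=0$ and $\max g=1$ regardless of what $a$ and $b$ are. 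In particular, in the regime $N_I<2k$ where $\min_{X,Y}\mathrm{RBO}@k$ is the strictly positive combinatorial optimum you describe, the minimizing pair still satisfies $\mathrm{FRBO}@k=0$, because that minimum is subtracted in the numerator of Equation~\ref{eq:FRBOfull}; you have effectively conflated ``$\min\mathrm{RBO}@k=0$'' with ``$\min\mathrm{FRBO}@k=0$''. The explicit values you derive are not wasted --- they are exactly what the paper establishes separately in Lemmas~\ref{thr:RBOmin} and~\ref{thr:RBOmax} in order to evaluate $\mathrm{FRBO}$ in closed form and to justify the simplification $\mathrm{FRBO}@k=\mathrm{RBO}@k/(1-p^k)$ when $N_I\gg k$ --- but they are not needed for this theorem, which holds for all $k\in\mathbb{N}^+$ (modulo the degenerate case $\min=\max$, which both you and the paper leave implicit).
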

\begin{proof}
    This follows from the fact that given a function $f:A\rightarrow[a,b]$ and another function $g=\frac{f-a}{b-a}$, then $g:A\rightarrow[0,1]$, where $A$ is any set.
\qed
\end{proof}

\looseness -1 To normalize RBO, we need to identify its minimum and maximum values when the summation is carried out up to the top-k elements of the ranking, simultaneously proving that these values are not naturally constrained to be 0 and 1.

\begin{lemma}\label{thr:RBOmin}
Given a set of items $I = \{I_1,...,I_{N_I}\}$, two rankings $X = (x_1,...,x_k)$ and $Y= (y_1,...,y_k)$, such that $x_i,y_i \in I$, and $k \in \mathbb{N^+}$, the following holds:
\begin{align} \label{eq:RBOmin}
    &\min_{X,Y}\mathrm{RBO@k} =
    \begin{cases}
        0, & \text{if } k \leq \lfloor{\frac{N_I}{2}}\rfloor\\
        (1-p)\left(2\frac{p^{\lfloor\frac{N_I}{2}\rfloor}-p^{N_I}}{1-p}  -N_I\ell\right) & \text{otherwise}
    \end{cases}\\
    &\text{where } \mathrm{RBO(X,Y)@k} = (1-p) \sum_{d=1}^{k} p^{d-1} \frac{|X[1:d] \cap Y[1:d]|}{d} \nonumber\\
    &\text{and }\ell = p^{\left\lfloor\frac{N_I}{2}\right\rfloor}\Phi(p,1,\lfloor\frac{N_I}{2}\rfloor+1) - p^{N_I}\Phi(p,1,N_I+1) \nonumber
    \end{align}
\end{lemma}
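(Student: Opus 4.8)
The plan is to exploit that $\mathrm{RBO}(X,Y)@k$ is a sum of terms $(1-p)\,p^{d-1}\,|X[1:d]\cap Y[1:d]|/d$, each of which is nonnegative because $0<p<1$. Hence minimizing the whole sum reduces to minimizing every prefix-overlap count $|X[1:d]\cap Y[1:d]|$, \emph{provided} a single pair $X,Y$ can attain all these pointwise minima at once. First I would establish the pointwise lower bound. Since each ranking is a list of distinct items (a top-$k$ recommendation), $X[1:d]$ and $Y[1:d]$ are $d$-element subsets of the $N_I$-element universe $I$, so inclusion--exclusion gives $|X[1:d]\cap Y[1:d]| = 2d - |X[1:d]\cup Y[1:d]| \ge 2d - N_I$, and combined with nonnegativity, $|X[1:d]\cap Y[1:d]| \ge \max(0,\,2d-N_I)$. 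This threshold is exactly $0$ for $d \le \lfloor N_I/2\rfloor$ and strictly positive once $d \ge \lfloor N_I/2\rfloor + 1$, which is precisely what produces the two cases of the statement.

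For $k \le \lfloor N_I/2\rfloor$ I would note that $2k \le N_I$, so $X$ and $Y$ can be chosen over disjoint item sets; then every prefix overlap is $0$, every term vanishes, and the bound $\mathrm{RBO}@k = 0$ is attained. The delicate case is $k > \lfloor N_I/2\rfloor$, where the pointwise minima are no longer all zero and I must exhibit one configuration realizing them \emph{simultaneously}. I would give an explicit construction: label $I = \{1,\dots,N_I\}$, set $X = (1,2,\dots,k)$, and let $Y$ enumerate first the ``upper-half'' items $\lfloor N_I/2\rfloor+1,\dots,N_I$ and only afterwards the low items $1,2,\dots$, truncated to length $k$. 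A depth-by-depth check shows the prefixes stay disjoint while $d \le \lfloor N_I/2\rfloor$ and then overlap by exactly $2d - N_I$ once $Y$ is forced to reuse low-indexed items, so $\max(0,2d-N_I)$ is met at every $d$. Proving that this single $Y$ attains all the pointwise minima at the same time is the main obstacle, since a priori the configurations minimizing different depths could conflict; the interleaving order is exactly what reconciles them.

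It then remains to evaluate the minimizing sum in closed form. As the terms for $d \le \lfloor N_I/2\rfloor$ drop out, I would write $\frac{2d-N_I}{d} = 2 - \frac{N_I}{d}$ for the surviving depths and split the sum. The ``$2$'' piece is a finite geometric series yielding $2\,\frac{p^{\lfloor N_I/2\rfloor}-p^{N_I}}{1-p}$, the first term of the claimed formula. The ``$N_I/d$'' piece is a harmonic-type series $\sum \frac{p^{d-1}}{d}$, which I would evaluate via the tail identity $\sum_{d=a}^{\infty}\frac{p^{d-1}}{d} = p^{a-1}\Phi(p,1,a)$; taking $a=\lfloor N_I/2\rfloor+1$ and subtracting the tail from $N_I+1$ gives exactly $\ell = p^{\lfloor N_I/2\rfloor}\Phi(p,1,\lfloor N_I/2\rfloor+1) - p^{N_I}\Phi(p,1,N_I+1)$. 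Multiplying the bracketed difference by $(1-p)$ assembles the two pieces into the stated expression. I would close by observing that this minimum is genuinely nonzero, confirming that RBO@k is not naturally constrained to $[0,1]$ and thus motivating the normalization into FRBO.
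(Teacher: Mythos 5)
Your proposal follows essentially the same route as the paper: the pointwise bound $|X[1:d]\cap Y[1:d]|\ge\max(0,\,2d-N_I)$ obtained by inclusion--exclusion, disjoint rankings to realize $0$ when $k\le\lfloor N_I/2\rfloor$, and the identical split of $\frac{2d-N_I}{d}=2-\frac{N_I}{d}$ into a finite geometric series plus a harmonic-type tail expressed via the Lerch transcendent. The one point where you go beyond the paper is the explicit interleaved construction of $Y$ certifying that all pointwise minima are attained \emph{simultaneously} — the paper only asserts that intersections should occur ``as far down the list as possible'' — so your treatment of that step is in fact the more rigorous of the two.
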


\begin{proof}
For the first part, it suffices to consider two rankings $X$, $Y$ that share no common elements, so that $|X[1:d] \cap Y[1:d]|=0$. This leads to:
$$\min_{X,Y}\mathrm{RBO@k} = (1-p)\sum_{d=1}^{k} p^{d-1}\frac{0}{d} = 0$$

However, since the number of items $I$ is not infinite, if $k>\lfloor\frac{N_I}{2}\rfloor$, at least one element must necessarily be in common between the two rankings: $$|X[1:k] \cap Y[1:k]| = 0 \iff |X[1:k] \cup Y[1:k]| = 2k \leq N_I \iff k\leq \lfloor\frac{N_I}{2}\rfloor$$

Consequently, the similarity can't assume a value of 0 if $k>\lfloor\frac{N_I}{2}\rfloor$.

\looseness -1 Given that $\mathrm{RBO(X,Y)@k+1} \geq \mathrm{RBO(X,Y)@k}\,\forall k \in \mathbb{N^+}$, similarity it's minimized when intersections between the two rankings occur as far down the list as possible.
When $d>\lfloor\frac{N_I}{2}\rfloor$, there are at least $2d-N_I$ intersections, because items in $d$-th position in one ranking are necessarily contained in the other.
\begin{align*}
    \min_{X,Y}\mathrm{RBO@k} &= (1-p)\left(0 + \sum_{d=\lfloor\frac{N_I}{2}\rfloor+1}^{N_I}p^{d-1}\frac{2d-N_I}{d}\right)\\
    &= (1-p)\left(2\sum_{d=\lfloor\frac{N_I}{2}\rfloor+1}^{N_I}p^{d-1}  -N_I\sum_{d=\lfloor\frac{N_I}{2}\rfloor+1}^{N_I}\frac{p^{d-1}}{d}\right)
\end{align*}

The first series can be regarded as a finite geometric series, for which we can apply the formula for the sum of a geometric series: $\sum_{n=1}^{k}ar^{n-1} = \frac{a(1-r^k)}{1-r}$:
\begin{align*}
\sum_{d=\lfloor\frac{N_I}{2}\rfloor+1}^{N_I}p^{d-1} = p^{\lfloor\frac{N_I}{2}\rfloor}\sum_{d=1}^{N_I-\lfloor\frac{N_I}{2}\rfloor}p^{d-1} = \frac{p^{\lfloor\frac{N_I}{2}\rfloor}-p^{N_I}}{1-p}
\end{align*}

 \looseness -1 Using Lerch transcendent function $\Phi(z,s,\alpha) \!\!=\!\! \sum_{n=0}^{+\infty} \! \!\frac{z^n}{(n+\alpha)^s}$ in second series:
\begin{align*}
\sum_{d=\lfloor\frac{N_I}{2}\rfloor+1}^{N_I}\frac{p^{d-1}}{d} &= \sum_{d=0}^{N_I-\lfloor\frac{N_I}{2}\rfloor-1}\frac{p^{d+\lfloor\frac{N_I}{2}\rfloor}}{d+\lfloor\frac{N_I}{2}\rfloor+1}\\
&= \sum_{d=0}^{+\infty}\frac{p^{d+\lfloor\frac{N_I}{2}\rfloor}}{d+\lfloor\frac{N_I}{2}\rfloor+1} - \sum_{d=N_I-\lfloor\frac{N_I}{2}\rfloor}^{+\infty}\frac{p^{d+\lfloor\frac{N_I}{2}\rfloor}}{d+\lfloor\frac{N_I}{2}\rfloor+1}\\
&= p^{\lfloor\frac{N_I}{2}\rfloor}\Phi(p,1,\lfloor\frac{N_I}{2}\rfloor+1) - p^{N_I}\Phi(p,1,N_I+1)
\end{align*}
\qed
\end{proof}

\begin{lemma}\label{thr:RBOmax}
Given a set of items $I = \{I_1,...,I_{N_I}\}$, two rankings $X = (x_1,...,x_k)$ and $Y= (y_1,...,y_k)$, such that $x_i,y_i \in I$, and $k \in \mathbb{N^+}$, the following holds:
    \begin{equation} \label{eq:RBOmax}
        \max_{X,Y}\mathrm{RBO@k} = 1-p^k\\
    \end{equation}
\end{lemma}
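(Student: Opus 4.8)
The plan is to establish $1 - p^k$ simultaneously as an upper bound on $\mathrm{RBO@k}$ and as a value that is actually attained, so that it is indeed the maximum. First I would observe that $\mathrm{RBO(X,Y)@k}$ is a weighted sum of the nonnegative quantities $\frac{|X[1:d] \cap Y[1:d]|}{d}$ for $d = 1, \dots, k$, each multiplied by the positive weight $(1-p)p^{d-1}$. Since $X[1:d]$ and $Y[1:d]$ are both sets of exactly $d$ items drawn from $I$, their intersection can contain at most $d$ elements, so $\frac{|X[1:d] \cap Y[1:d]|}{d} \leq 1$ for every $d$.

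Next I would apply this termwise bound to obtain $\mathrm{RBO(X,Y)@k} \leq (1-p)\sum_{d=1}^{k} p^{d-1}$. The remaining sum is a finite geometric series, which evaluates via $\sum_{d=1}^{k} p^{d-1} = \frac{1-p^k}{1-p}$, so the leading factor $(1-p)$ cancels and the upper bound collapses to exactly $1 - p^k$.

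Finally I would show that this bound is tight by exhibiting a pair of rankings that attains it. Taking $X = Y$ forces $X[1:d] = Y[1:d]$ as sets for every $d$, so each intersection has size exactly $d$ and all the termwise inequalities become equalities at once. Substituting back yields $\mathrm{RBO(X,X)@k} = (1-p)\sum_{d=1}^{k} p^{d-1} = 1 - p^k$, matching the upper bound and certifying it as the maximum.

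I do not expect a genuine obstacle here, as the argument is essentially routine. The only point requiring a moment of care is that the maximum of a sum need not equal the sum of the termwise maxima in general; here it does precisely because the single choice of identical rankings simultaneously saturates every term. I would therefore emphasize this \emph{simultaneity} as the crux that allows the upper bound and the attained value to coincide.
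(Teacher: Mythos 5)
Your proposal is correct and follows essentially the same route as the paper: both evaluate the case $X=Y$ via the finite geometric series $(1-p)\sum_{d=1}^{k}p^{d-1}=1-p^{k}$. The only difference is that you explicitly establish the termwise upper bound $\frac{|X[1:d]\cap Y[1:d]|}{d}\leq 1$ before showing it is simultaneously attained, whereas the paper simply asserts that identical rankings maximize RBO@k; your version is marginally more complete but not a different argument.
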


\begin{proof}
RBO@k reaches its maximum value when the two rankings are identical:
$$|X[1:d] \cap Y[1:d]| = d \,\,\, \forall d \in \{1,\dots,k\}$$

Referring to the result for the geometric series, we can compute:
\begin{align*}
max_{X,Y}\mathrm{RBO@k} &= (1-p)\sum_{d=1}^{k}p^{d-1}\frac{d}{d} = (1-p)\sum_{d=1}^{k}p^{d-1} = 1-p^k\\
\end{align*}
\qed
\end{proof}

\begin{corollary}\label{thr:RBOminmax}
Given a set of items $I = \{I_1,...,I_{N_I}\}$ and $k \in \mathbb{N^+}$:
    $$\min_{X,Y}\mathrm{RBO(X,Y)@k} \neq 0,\; \max_{X,Y}\mathrm{RBO(X,Y)@k} \neq 1$$
    $$\exists X,Y \in \{(x_1,\dots,x_k)|x_i \in I \land x_i \neq x_j \forall i \neq j\}$$
    $$\text{s.t. } \mathrm{RBO(X,Y)@k} \neq \mathrm{FRBO(X,Y)@k}$$
\end{corollary}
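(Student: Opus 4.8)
The plan is to read the corollary as two linked assertions and dispatch them in order. First, that the extremal values of $\mathrm{RBO@k}$ are genuinely not the naive bounds $0$ and $1$; second, that this gap forces $\mathrm{FRBO@k} \neq \mathrm{RBO@k}$ on at least one admissible pair of distinct-entry rankings. Both bound statements fall out directly from Lemmas \ref{thr:RBOmin} and \ref{thr:RBOmax}, so the only real work is producing a concrete witness for the final inequality, and I would choose the simplest possible one.

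For the maximum I would invoke Lemma \ref{thr:RBOmax}: $\max_{X,Y}\mathrm{RBO@k} = 1 - p^k$. Since the RBO weight satisfies $0 < p < 1$, we have $0 < p^k < 1$, so $1 - p^k \in (0,1)$ and in particular $\max_{X,Y}\mathrm{RBO@k} \neq 1$. For the minimum I would use Lemma \ref{thr:RBOmin}, which gives $0$ while $k \leq \lfloor N_I/2\rfloor$ but a strictly positive value once $k > \lfloor N_I/2\rfloor$. I would establish that positivity not from the Lerch-transcendent closed form but from the pre-summation expression $(1-p)\sum_{d=\lfloor N_I/2\rfloor+1}^{N_I} p^{d-1}\frac{2d-N_I}{d}$ appearing in the proof of that lemma: in this range $p^{d-1} > 0$ and $2d - N_I \geq 1$, so every summand is strictly positive and hence so is the whole sum. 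This is exactly the content of "not naturally constrained to be $0$ and $1$."

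The final inequality then follows from the maximum claim alone. Assuming $N_I \geq k$, so that the set of length-$k$ rankings with distinct entries is nonempty, I would pick any such ranking $X$ and set $Y = X$. These identical rankings attain the maximum, so by Lemma \ref{thr:RBOmax} we have $\mathrm{RBO(X,X)@k} = 1 - p^k$. Since this equals $\max_{X,Y}\mathrm{RBO@k}$, the definition in Equation \ref{eq:FRBOfull} maps it to the top of the normalized range, giving $\mathrm{FRBO(X,X)@k} = 1$ (the normalization being nondegenerate, as $\min_{X,Y}\mathrm{RBO@k} < \max_{X,Y}\mathrm{RBO@k}$ whenever $N_I \geq k \geq 1$ and $0<p<1$). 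Therefore
\begin{equation*}
\mathrm{RBO(X,X)@k} = 1 - p^k \neq 1 = \mathrm{FRBO(X,X)@k},
\end{equation*}
the inequality being strict because $p^k > 0$, which settles the existence assertion.

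The main obstacle I anticipate is the minimum claim, not the maximum or the witness. Its truth is genuinely regime-dependent, since $\min_{X,Y}\mathrm{RBO@k}$ really does equal $0$ for $k \leq \lfloor N_I/2\rfloor$; the statement $\min_{X,Y}\mathrm{RBO@k} \neq 0$ should therefore be read as "not identically $0$," i.e. strictly positive in the $k > \lfloor N_I/2\rfloor$ branch. The cleanest route to that strict positivity is the term-by-term argument on the finite sum above rather than any manipulation of the Lerch transcendent, and once it and $\max_{X,Y}\mathrm{RBO@k} = 1 - p^k \neq 1$ are in hand, the rest reduces to reusing the two lemmas and the elementary fact that $p^k \neq 0$.
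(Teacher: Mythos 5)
Your proposal is correct and follows essentially the same route as the paper, whose proof consists of a single line citing Lemmas \ref{thr:RBOmin} and \ref{thr:RBOmax}; you simply make explicit the details the paper leaves implicit (the witness $Y=X$ giving $\mathrm{RBO}=1-p^k\neq 1=\mathrm{FRBO}$, and positivity of the minimum via the term-by-term bound $2d-N_I\geq 1$). Your observation that $\min_{X,Y}\mathrm{RBO@k}=0$ does hold when $k\leq\lfloor N_I/2\rfloor$, so the first displayed claim must be read as restricted to the $k>\lfloor N_I/2\rfloor$ regime, is a fair and accurate caveat that the paper does not address.
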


\begin{proof}
This follows from Theorems \ref{thr:RBOmin} and \ref{thr:RBOmax}.
\qed
\end{proof}

\looseness -1 It's worth considering that in real-world scenarios, the number of possible items $N_I$ is significantly larger compared to the length $k$ of the rankings used to compute the metrics, i.e., $N_I>>k$. In this context, we can safely omit the minimum value from Eq. \ref{eq:FRBOfull}, resulting in:

\begin{equation*}\label{eq:FRBOsimple}
    \boldsymbol{\mathrm{FRBO(X,Y)@k}} = \frac{\mathrm{RBO(X,Y)@k}}{\max_{X,Y}\mathrm{RBO@k}} = \frac{1-p}{1-p^k}\sum_{d=1}^{k} p^{d-1} \frac{|X[1:d] \cap Y[1:d]|}{d}
\end{equation*}

\looseness -1 In this section, we have shown that RBO is not an adequate similarity score when dealing with finite-length rankings. So, we have derived expressions that quantify the minimum and maximum values of RBO, allowing us to compute a normalized version of RBO.

Kendall's Tau \cite{kendall1948rank} assumes that two rankings contain precisely the same items. However, this assumption may not hold for finite top-k ranked lists. In addition, average overlap \cite{wu2003methods,fagin2003comparing} has a peculiar property of monotonicity in depth, where greater agreement with a deeper ranking does not necessarily lead to a higher score, and less agreement does not necessarily lead to a lower score \cite{webberrbo}.

\section{Experiments} \label{experiments}

\subsection{Datasets}

\begin{table}[t]
    \caption{Dataset statistics after preprocessing}
    \begin{tabular}{l||ccc|cc}
        \toprule
        Dataset & Users & Items & Interactions & Average $\frac{\mathrm{Actions}}{\mathrm{User}}$ & Median $\frac{\mathrm{Actions}}{\mathrm{User}}$ \\
        \midrule
        \midrule
        MovieLens 1M  & 6040  & 3952  & 1M  & 165  & 96   \\
        \midrule
        MovieLens 100K & 943 & 1682 & 100K & 106 & 65  \\
        \midrule
        Foursquare Tokyo & 2293 & 61858 & 537703 & 250 & 153       \\
        \midrule
        Foursquare New York & 1083 & 38333 & 227428 & 210 & 173       \\
        \bottomrule         
    \end{tabular}
    \label{tab:stats}
\end{table}


We use four different datasets:

\textbf{MovieLens} \cite{harpermovielens} $\rightarrow$ This benchmark dataset is often used to test recommender systems. In this work, we use the 100K version
and 1M version.

\textbf{Foursquare} \cite{yang2014modeling} $\rightarrow$ This dataset contains check-ins from New York City and Tokyo collected over a period of approximately ten months.

The statistics for all the datasets are shown in Table \ref{tab:stats}.

We select datasets widely used in the literature and with a high number of interactions per user.
The limitation in dataset selection arises from our intention to assess the robustness against the removal of up to 10 elements.
Therefore, the dataset must satisfy the following constraint: $L_u > 10 \quad \forall u \in U$,
where $L_u$ is the number of interactions of user $u$, i.e. the length of the sequence $S_u$ of interactions, and $U$ is the set of all users in the dataset.
If the condition is not met, we delete all the items for a user with less than ten interactions. In this case, we cannot train the model on this particular user. We have, thus, decided to exclude datasets such as Amazon \cite{ni2019justifying} for they cannot meet the previous criteria.

\subsection{Architectures}
In our study, we use two different architectures to validate the results:

\begin{itemize}
    \item \textbf{SASRec} \cite{kang2018self} uses self-attention processes to determine the importance of each interaction between the user and the item.
    
    \item \textbf{GRU4Rec} \cite{hidasi2016sessionbased} is a recurrent neural network architecture that uses gated recurrent units (GRUs) \cite{cho2014learning} to improve the accuracy of the prediction.
\end{itemize}

\looseness -1 We choose to use these two models because both have demonstrated excellent performance in several benchmarks and have been widely cited in the literature.
Furthermore, as one model employs attention while the other utilizes RNN, their network functioning differs, which makes evaluating their behavior under training perturbations particularly interesting.
We use the models' implementation provided by the RecBole Python library \cite{zhao2021recbole}, with their default hyperparameters.

\subsection{Experimental Setup}
\looseness -1 All the experiments are performed on a single NVIDIA RTX 3090.
The batch size is fixed to 4096. Adam optimizer is used with a fixed learning rate of $5*10^{-4}$. The number of epochs is set to 300, but in order to avoid overfitting, we stop the training when the NDCG@20 does not improve for 50 epochs. The average duration of each run is 1.5 hours.
The RecBole library was utilized for conducting all the experiments, encompassing data preprocessing, model configuration, training, and testing. This comprehensive library ensures the reproducibility of the entire process. All the evaluation metrics are calculated with a cut-off $K$ of 20.
To validate the effective degradation in performance and rankings similarity, we employed the paired Student's t-test \cite{student1908probable}, after testing normality of distributions with Shapiro–Wilk test \cite{shapiro1965analysis}, and a significance level of $10^{-3}$.

\section{Results} \label{sec:results}

\subsection{Intrinsic Models Instability (RQ1)}
\begin{table}[t]
\caption{\textbf{Variation of metrics between two seeds.}
Metrics for the 4 datasets considered for GRU4Rec and SASRec.
For Precision (Prec.), Recall, MRR and NDCG, it is shown the percentage variation between the obtained performance using two different initialization seeds for the models.
For each metric, the value corresponding to the dataset where a model is less robust is highlighted in \textbf{bold}. For each dataset, the value corresponding to the model that is the least robust of the two given a metric is \underline{underlined}.}
\centering
\resizebox{\textwidth}{!}{
\begin{tabular}{l||cccc|cc||cccc|cc}
\toprule
& \multicolumn{6}{c||}{SASRec} & \multicolumn{6}{c}{GRU4Rec} \\
\midrule
 & Prec. & Recall & MRR & NDCG & FRBO & JAC & Prec. & Recall & MRR & NDCG & FRBO & JAC\\
\midrule\midrule
ML 100k & 0.5\% & 0.5\% & 0.5\% & \underline{0.3\%} & .466 & .489 & \underline{\textbf{1.7\%}} & \underline{\textbf{1.7\%}} & \underline{1.2\%} & 0.1\% & \underline{.337} & \underline{.413}\\ 
\midrule
ML 1M & \underline{0.3\%} & \underline{0.4\%} & \textbf{0.5\%} & \textbf{0.5\%} & .549 & .569 & 0.2\% & 0.2\% & \underline{3.5\%} & \underline{2.5\%} & \underline{.311} & \underline{.347}\\
\midrule
FS NYC & \underline{0.9\%} & \underline{0.9\%} & 0.0\% & \underline{0.3\%} & \textbf{.398} & \textbf{.273} & 0.1\% & 0.1\% & \underline{0.6\%} & 0.1\% & \underline{\textbf{.110}} & \underline{\textbf{.083}}\\
\midrule
FS TKY & \underline{\textbf{1.4\%}} & \underline{\textbf{1.4\%}} & 0.4\% & 0.02\% & .418 & .267 & 0.8\% & 0.8\% & \underline{\textbf{4.7\%}} & \underline{\textbf{3.4\%}} & \underline{.210} & \underline{.165}\\
\bottomrule
\end{tabular}
}
\label{tab:baseline_metrics}
\end{table}

\looseness -1 To measure the inherent robustness of the models, i.e. in the baseline case (without removal of items), we train the model twice using different initialization seeds and compute the percentage discrepancy between the Precision, Recall, MRR and NDCG obtained by the two rankings. The results are shown in Table \ref{tab:baseline_metrics}: it can be seen that in general the discrepancy is negligible, almost always less than 1\%.
On the other hand, the two RLS, calculated using FRBO and Jaccard respectively, show us the similarity between the two rankings produced with different initialization seeds. The results deviate significantly from the ideal value of 1, indicating considerably different rankings.
These combined results indicate to us that the models converge to an adequate performance beyond the initialization seed, but that the actual rankings produced are heavily influenced by it.
\looseness -1 The \textbf{bold} represent the dataset with the least robust result for each metric. No datasets stands out significantly, yet Foursquare Tokyo seems to give more problems regarding standard evaluation metrics, while for Foursquare New York City it seems more difficult to produce stable rankings.
The \underline{underlined} values instead compare, for each metric and dataset, which of the two models is the least robust. If we consider metrics that do not consider the position of the positive item in the ranking, i.e. Precision and Recall, GRU4Rec seems more robust than SASRec. If, on the other hand, we look at metrics that penalize relevant items in positions too low in the ranking, we see that the opposite happens. This suggests to us that GRU4Rec can return a better set of results, but in a less relevant order than SASRec does.
As proof of this, if we check the RLS values, we see that GRU4Rec is always the least robust model as the initialization seed changes.


\subsection{Comparison of the position of removal (RQ2)}
Table \ref{tab:comparison_removal_position} compares performance and stability when ten items are removed from the sequence versus retaining all items (reference value) in the training set with a consistent initialization seed. It's observed that discarding items from the beginning or middle of the sequence does not significantly impair the model's performance; only a minor decline is noted, potentially attributable to the marginally reduced volume of total training data.
Instead, it can be observed how removing items from the end of the sequence leads to a drastic reduction in metrics: in the case of SASRec applied to the MovieLens 1M dataset, the NDCG more than halves.
Finally, we can see how the difference between the three settings, although maintaining the same trend, is less marked for GRU4Rec applied to the Foursquare NYC dataset. This may be due to the generally higher performance of GRU4Rec \cite{kang2018self}. A more in-depth analysis is presented in Section \ref{res:dataset_diff}.

\looseness -1 Table \ref{tab:comparison_removal_position} also shows the RLS values, computed using FRBO and Jaccard similarity, on the same model-dataset pairs: removing items at the end of the sequence leads to considerable variation in the rankings produced by the models. The values approach 0, meaning that the produced rankings share almost no items.
Our results are in contrast to those of \cite{oh2022rank}, which instead claim that an initial perturbation of a user sequence leads to a higher impact on the RLS. However, their experimental setting is different than ours, as explained in Section \ref{sec:related}.

\begin{table}[t]
\caption{\looseness -1 \textbf{Variations in metrics for ten-item removal.} Metrics for the 3 scenarios considered for SASRec on ML-1M and GRU4Rec on FS-NYC. For Precision, Recall, MRR, and NDCG, it is shown the percentage variation between removing ten items and the reference value. 
For each metric, in \textbf{bold} it is highlighted the value representing the less robust model. $^\dagger$ indicates a statistically significant result.}
\label{tab:comparison_removal_position}
\centering
\begin{tabular}{c|l||cccc|cc}
\toprule
Model & Removal & Prec. & Recall & MRR & NDCG & FRBO & JAC \\
\midrule
\multirow{3}{*}{\makecell{SASRec\\ML-1M}} & Beginning & -0.23\% & -0.23\% & -0.15\% & -0.07\% & .399$^\dagger$ & .368$^\dagger$ \\
\cline{2-8}
& Middle & -0.35\% & -0.29\% & -1.09\% & -0.73\% & .385$^\dagger$ & .356$^\dagger$ \\
\cline{2-8}
& End & \textbf{-15.5\%$^\dagger$} & \textbf{-15.3\%$^\dagger$} & \textbf{-45.9\%$^\dagger$} & \textbf{-56.0\%$^\dagger$} &\textbf{.080$^\dagger$} & \textbf{.106$^\dagger$} \\
\midrule
\multirow{3}{*}{\makecell{GRU4Rec\\FS-NYC}} & Beginning & -0.23\% & -0.23\% & -1.47\% & -0.94\% & .105$^\dagger$ & .075$^\dagger$ \\
\cline{2-8}
& Middle & -0.93\% & -0.93\% & -0.18\% & -0.44\% & .110$^\dagger$ & .074$^\dagger$ \\
\cline{2-8}
& End & \textbf{-4.92\%$^\dagger$} & \textbf{-4.86\%$^\dagger$} & \textbf{-8.42\%$^\dagger$} & \textbf{-7.39\%$^\dagger$} & \textbf{.089$^\dagger$} & \textbf{.062$^\dagger$} \\
\bottomrule
\end{tabular}
\end{table}

\subsection{Effect of the number of elements removed (RQ3)}
\begin{figure}[!ht]
     \centering
     \begin{subfigure}[t]{0.47\textwidth}
         \centering
         \includegraphics[width=0.86\textwidth]{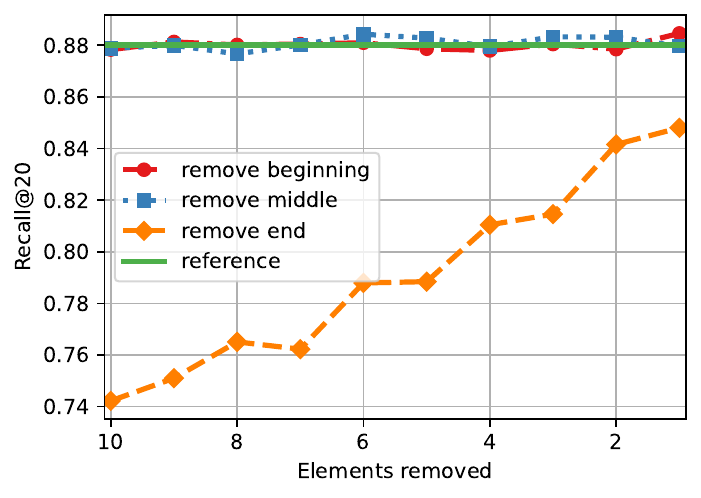}
         \caption{HR@20 SASRec ML-100k}
         \label{fig:number_removal_comparison_recall_sas_ml-1m}
     \end{subfigure}
     \hfill
     \begin{subfigure}[t]{0.47\textwidth}
         \centering
         \includegraphics[width=0.86\textwidth]{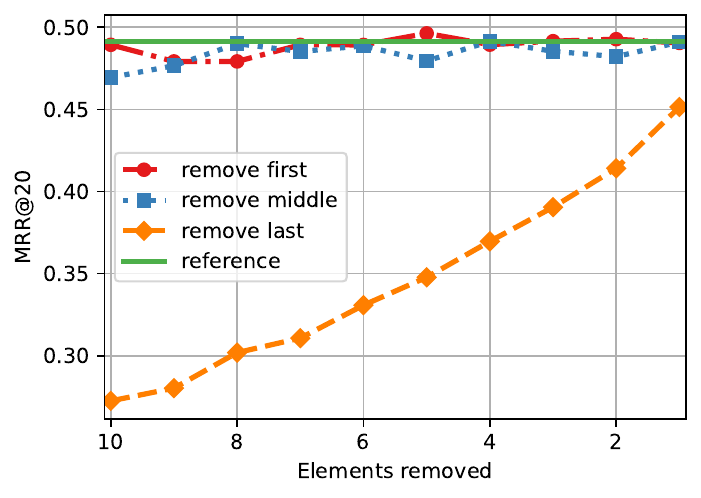}
         \caption{MRR GRU4Rec ML-1M}
         \label{fig:number_removal_comparison_mrr_gru_ml-100k}
     \end{subfigure}
     \\
     \begin{subfigure}[t]{0.47\textwidth}
         \centering
         \includegraphics[width=0.86\textwidth]{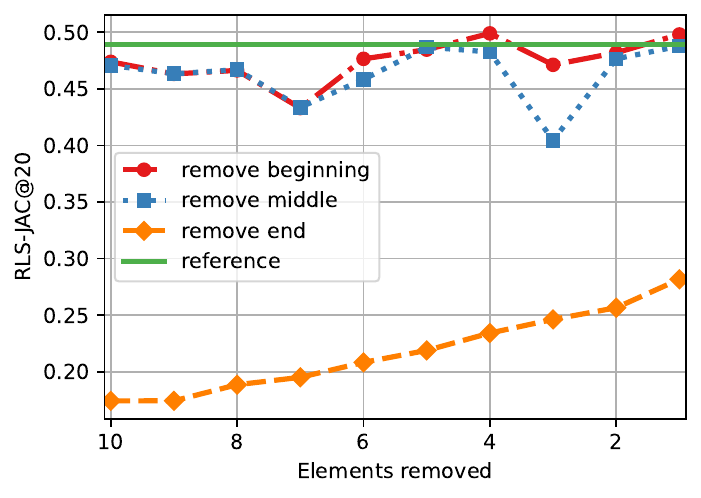}
         \caption{RLS-JAC@20 SASRec ML-100k}
         \label{fig:number_removal_comparison_jac_sas_ml-100k}
     \end{subfigure}
     \hfill
     \begin{subfigure}[t]{0.47\textwidth}
         \centering
         \includegraphics[width=0.86\textwidth]{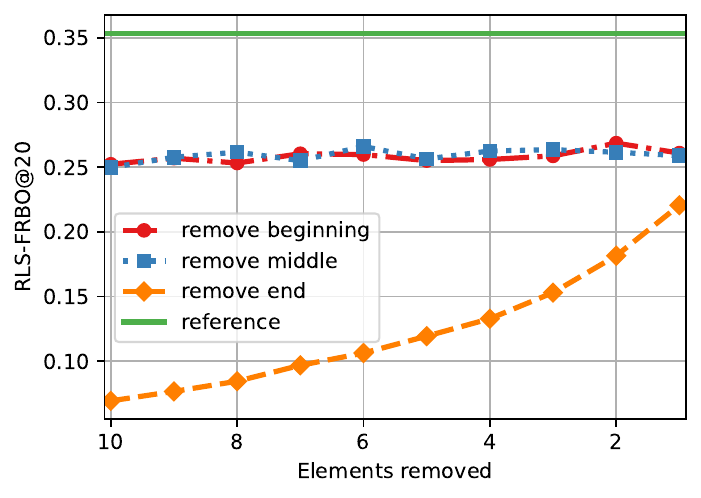}
         \caption{RLS-FRBO@20 GRU4Rec ML-1M}
         \label{fig:number_removal_comparison_rbo_gru_ml-1m}
     \end{subfigure}
    \caption{\looseness -1 Plots of various metrics for the ML-100k and ML-1M datasets as the number of removed elements increases. The baseline is shown as a horizontal solid line, while dashed lines show the metrics as the number of items removed changes for the three scenarios considered.}
    \label{fig:number_removal_comparison}
\end{figure}

\looseness -1 As we discussed in the previous section, removing elements that are at the beginning or in the middle of the temporally ordered sequence has no effect on performance. This is also confirmed by Figure \ref{fig:number_removal_comparison}, where we can also see, however, that for the above-mentioned cases there is no variation as the number of removed elements increases.
The deviation of the RLS displayed in Figure \ref{fig:number_removal_comparison_rbo_gru_ml-1m} will be analyzed in more detail in Section \ref{res:dataset_diff}.
For the remaining setting, the one where we remove items at the end of the sequence, the effect of the number of items removed is evident: the metrics drop drastically as the number of items removed increases. This result holds true for both models considered and for all four datasets tested.

\subsection{Differences between the datasets}\label{res:dataset_diff}

\begin{figure}[!ht]
     \centering
     \begin{subfigure}[t]{0.47\textwidth}
         \centering
         \includegraphics[width=0.86\textwidth]{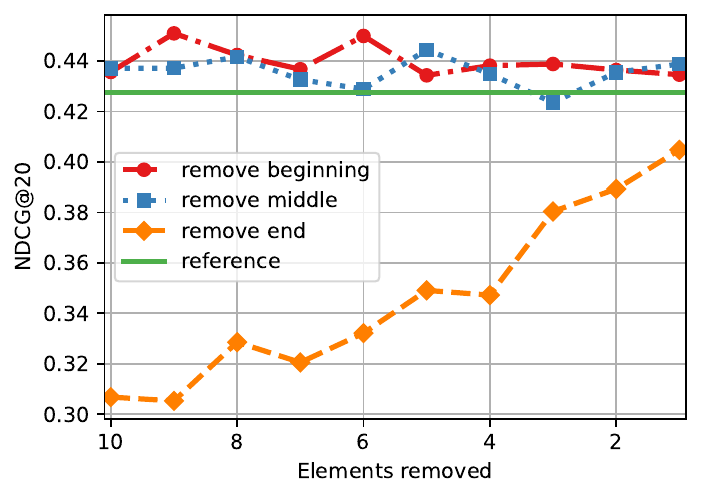}
          \caption{NDCG@20 SASRec ML-100k}\label{fig:datasets_comparison_ndcg_ml-100k}
     \end{subfigure}
     \hfill
     \begin{subfigure}[t]{0.47\textwidth}
         \centering
         \includegraphics[width=0.86\textwidth]{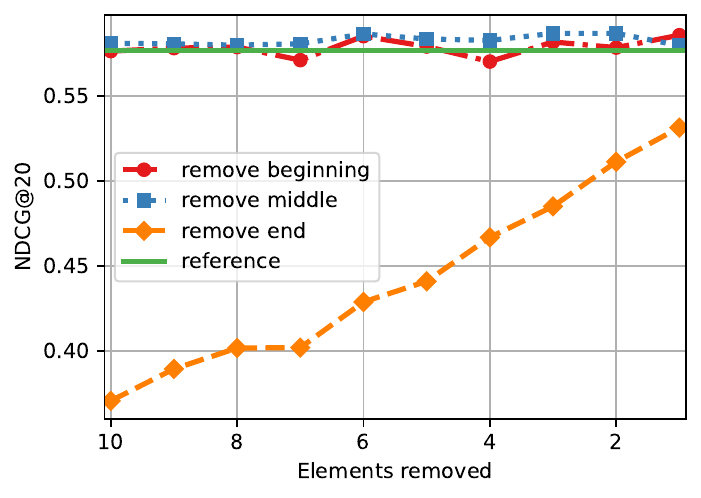}
          \caption{NDCG@20 SASRec ML-1M}
          \label{fig:datasets_comparison_ndcg_ml-1m}
     \end{subfigure}
     \\
     \begin{subfigure}[b]{0.47\textwidth}
         \centering
         \includegraphics[width=0.86\textwidth]{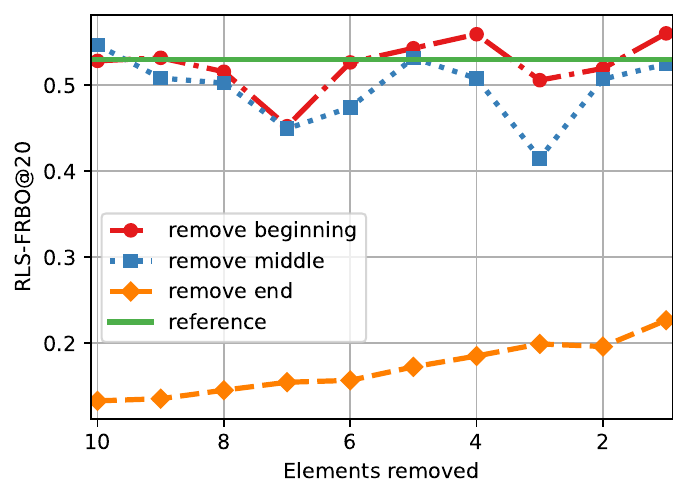}
          \caption{FRBO@20 SASRec ML-100k}
          \label{fig:datasets_comparison_rbo_ml-100k}
     \end{subfigure}
     \hfill
     \begin{subfigure}[b]{0.47\textwidth}
         \centering
         \includegraphics[width=0.86\textwidth]{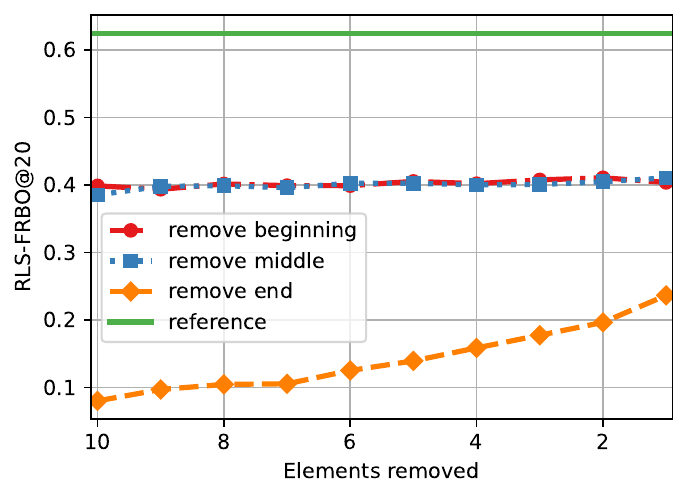}
          \caption{FRBO@20 SASRec ML-1M}
          \label{fig:datasets_comparison_rbo_ml-1m}
     \end{subfigure}
    \caption{\looseness -1 Plots of NDCG and FRBO for SASRec on the ML-100K and ML-1M datasets. The baseline is shown as a horizontal solid line, while dashed lines show the metrics as the number of items removed changes for the three scenarios considered.}    
    \label{fig:datasets_comparison_movielens}
\end{figure}

\begin{figure}[!ht]
     \centering
     \begin{subfigure}[t]{0.47\textwidth}
         \centering
         \includegraphics[width=0.9\textwidth]{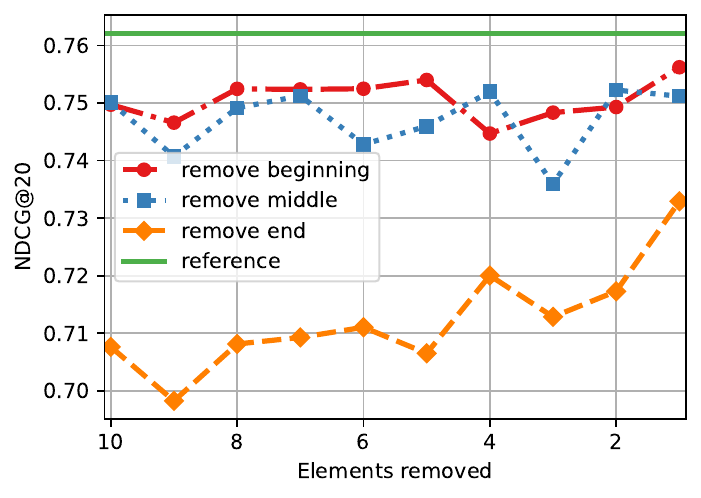}
          \caption{NDCG@20 SASRec FS-TKY}
          \label{fig:datasets_comparison_ndcg_tky}
     \end{subfigure}
     \hfill
     \begin{subfigure}[t]{0.47\textwidth}
         \centering
         \includegraphics[width=0.9\textwidth]{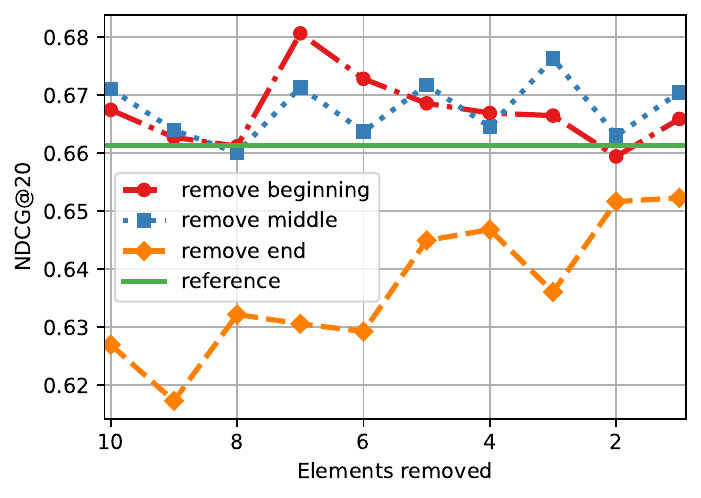}
          \caption{NDCG@20 SASRec FS-NYC}
          \label{fig:datasets_comparison_ndcg_nyc}
     \end{subfigure}
     \\
     \begin{subfigure}[b]{0.47\textwidth}
         \centering
         \includegraphics[width=0.9\textwidth]{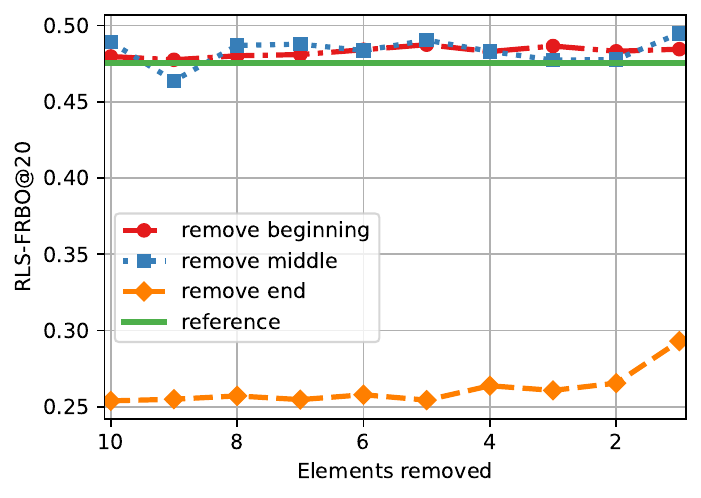}
          \caption{FRBO@20 SASRec FS-TKY}
          \label{fig:datasets_comparison_rbo_tky}
     \end{subfigure}
     \hfill
     \begin{subfigure}[b]{0.47\textwidth}
         \centering
         \includegraphics[width=0.9\textwidth]{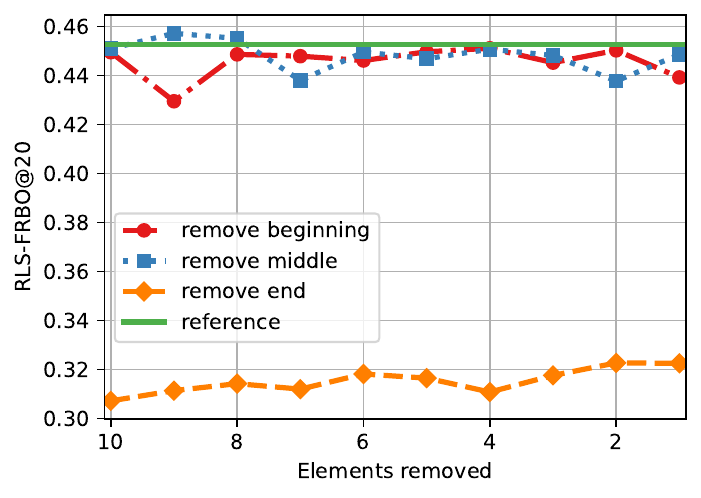}
          \caption{FRBO@20 SASRec FS-NYC}\label{fig:datasets_comparison_rbo_nyc}
     \end{subfigure}
    \caption{\looseness -1 Plots of NDCG and FRBO for SASRec on FS-TKY and FS-NYC datasets. The baseline is shown as a horizontal solid line, while dashed lines show the metrics as the number of items removed changes for the three scenarios considered.}    
    \label{fig:datasets_comparison_foursquare}
\end{figure}

Figures \ref{fig:datasets_comparison_movielens} and \ref{fig:datasets_comparison_foursquare} show the performance for the three different settings for the SASRec model applied to all datasets. From Figures \ref{fig:datasets_comparison_ndcg_ml-100k}, \ref{fig:datasets_comparison_ndcg_ml-1m}, \ref{fig:datasets_comparison_rbo_ml-100k}, \ref{fig:datasets_comparison_rbo_ml-1m} we see that the downward trend of the metric when removing items at the end of the sequence is a characteristic of the MovieLens dataset: both NDCG@20 and RLS-FRBO@20 show a decrease when increasing the number of removed items.
We hypothesize that this is happening because the average number of actions per user and the number of items (see Table \ref{tab:stats}) are not that large compared to the number of items removed.
\looseness -1 On the other hand, the Foursquare datasets (\ref{fig:datasets_comparison_ndcg_tky}, \ref{fig:datasets_comparison_ndcg_nyc}, \ref{fig:datasets_comparison_rbo_tky}, \ref{fig:datasets_comparison_rbo_nyc})  do not suffer major performance degradation, probably due to the higher average number of actions per user and the number of items (see Table \ref{tab:stats}) than MovieLens. In addition to this, and probably for the same motivation, the degradation of the RLS is lower with respect to that displayed in the MovieLens datasets.
Finally, it is interesting to note that on MovieLens 1M, there is a consistent performance degradation even when elements at the beginning and in the middle of the temporally ordered sequence are removed.
This can be observed as a small decrease in the NDCG@20 (Figure \ref{fig:datasets_comparison_ndcg_ml-1m}), but a sharp decrease in the value of the RLS-FRBO (Figures \ref{fig:number_removal_comparison_rbo_gru_ml-1m}, \ref{fig:datasets_comparison_rbo_ml-1m}). This means that even if the model performs approximately the same, the rankings produced vary greatly.
\looseness -1 The cause may be the fact that the MovieLens 1M dataset, among those considered, has the largest number of users and interactions.

\section{Conclusion} \label{sec:conclusion}
\looseness -1 In this work, we have analyzed the importance of the position of items in a temporally ordered sequence for training SRSs.
For this purpose, we introduced Finite RBO, a version of RBO for finite-length ranking lists and proved its normalization in [0,1].
Our results demonstrate the importance of the most recent elements in users' interaction sequence: when these items are removed from the training data, there is a significant drop in all evaluation metrics for all case studies investigated and this reduction is proportional to the number of elements removed. Conversely, this reduction is not as pronounced when elements at the beginning and middle of the sequence are removed.
We validated our hypothesis using four different datasets and two different models, using traditional evaluation metrics such as NDCG, Recall, but also RLS, a measure specifically designed to measure Sensitivity.
Future work in this direction could first extend our results to more models and more datasets, and then investigate a way to make the models robust to the removal of training data. We hypothesize that the solution may lie in using different training strategies \cite{petrov2022effective}, robust loss functions \cite{bucarelli2023leveraging,wani2023combining}, or different optimization objectives \cite{bacciu2023integrating}.


\bibliographystyle{splncs04}
\bibliography{citations}

\end{document}